\documentclass[a4paper, 9pt, leqno]{amsart}

\usepackage{multicol, caption}
\oddsidemargin25mm
\evensidemargin25mm
\setlength{\textwidth}{175mm}
\setlength{\textheight}{250mm}
\setlength{\hoffset}{-34mm}
\setlength{\voffset}{-17mm}
\setlength{\footskip}{8mm}
\setlength{\parindent}{5mm}
\setlength{\headheight}{0mm}
\setlength{\parskip}{1.5ex}

\usepackage{defs}

\newenvironment{Figure}
{\par\medskip\noindent\minipage{\linewidth}}
{\endminipage\par\medskip}

\usepackage{url}

\title{Characterization of Maximum Hands-Off Control}
\author{Debasish Chatterjee}
\author{Masaaki Nagahara}
\author{Daniel Quevedo}
\author{K. S. Mallikarjuna Rao}

\begin{document}

\begin{abstract}
Maximum hands-off control aims to maximize the length of time over which zero actuator values are applied to a system when executing specified control tasks. To tackle such problems, recent literature has investigated optimal control problems which penalize the size of the support of the control function and thereby lead to desired sparsity properties.  This article gives the exact set of necessary conditions for a maximum hands-off optimal control problem using an \(\Lp 0\)-(semi)norm, and also provides sufficient conditions for the optimality of such controls. Numerical example illustrates that adopting an \(\Lp 0\) cost leads to a sparse control, whereas an \(\Lp  1\)-relaxation in singular problems leads to a non-sparse solution.%
\end{abstract}

\maketitle

	\markboth{Maximum Hands-off control: Existence and characterization}{D.\ Chatterjee, M.\ Nagahara, D.\ Quevedo, and K.\ S.\ Mallikarjuna Rao}

	%\begin{IEEEkeywords}
	%	maximum hands-off control, \(\Lp 0\)-optimal control, sparsity, non-smooth maximum principle
	%\end{IEEEkeywords}

	%\IEEEpeerreviewmaketitle

\begin{multicols}{2}

% !TEX root = main2clSCL.tex

\newcommand{\hfs}{\hfill\ensuremath{\square}} 

%%%%%%%%%%%%%%%%%%%%%%%%%%%%%%%%%%%%%%%%%%%%%%%%%%%%%%%%%%%%%%%%%%%%%%%%%%%%%%%%
\section{Introduction}
\label{s:intro}

	Motivated by a diverse array of applications in automotive industry, railway vehicles, and networked control, the recent works \cite{nagque13a,ref:NagQueNes-14} dealt in detail with the concept of \emph{maximum hands-off control}. The purpose of maximum hands-off control is to design actuator signals which are most often zero, but nonetheless achieve given control objectives. This motivates the use of instantaneous cost functions where the control effort is penalized via the \(\Lp 0\)-(semi)norm, thereby leading to a \emph{sparse} control function, cf.\ \cite{hargal13a,nagque14a,schebe11a,bhabac11a, ref:CapForPicTre-13, ref:CapForPicTre-15, ref:LinFarJov-13}. Sparse controls are of great importance in situations where a central processor must be shared by different controllers, and sparse control is a new and emerging area of research, including applications in the theory of control of partial differential equations \cite{ref:CasTro-14, ref:CasHerWac-12, ref:Sta-09, ref:VosMau-06, ref:ClaItoKun-15}.

	Due to the discontinuous and non-convex nature of the instantaneous cost function in \(\Lp 0\)-optimal control problems, solving such problems is in general difficult. Hence, the precursor article \cite{ref:NagQueNes-14} focused  on relaxations to the problem, akin to methods used in compressed sensing applications \cite{don06a}. To be more precise, \cite{ref:NagQueNes-14} examined smooth and convex relaxations of the maximum hands-off control problem, including considering an \(\Lp 1\)-cost and regularizations with an \(\Lp 2\)-cost to obtain smooth hands-off control. (It is a well-known and classical result that under ``nonsingularity'' assumptions on the control system \cite[Chapter 8]{ref:AthansFalb-66}, \(\Lp 1\)-costs lead to sparse solutions in the control. However, in singular problem instances, it is unclear whether \(\Lp 1\)-regularizations lead to sparse solutions.) The exact \(\Lp 0\)-optimal control problem was not investigated in \cite{ref:NagQueNes-14}.

	The purpose of the present article is to complement \cite{ref:NagQueNes-14} by directly dealing with the underlying non-smooth and non-convex \(\Lp 0\)-optimal control problem without the aid of smooth or convex relaxations. We will focus on nonlinear controlled dynamical systems of the form
	\begin{equation}
	\label{e:sys}
		\dot z(t) = \phi\bigl(z(t), u(t)\bigr)
	\end{equation}
	with state $z$, input $u$ and where  \(\phi:\R^d\times\R^m\lra\R^d\) is a continuously differentiable map describing the open-loop system dynamics.  The maximum hands-off control problem aims to minimize the support of the control map, or in other words, maximize the time duration over which the control map is exactly zero.

	In other words, given real numbers \(\tinit, \tfin\in\R\) with \(\tinit < \tfin\), vectors \(A, B\in\R^d\), a compact set \(\admact\subset\R^m\) containing \(0\in\R^m\) in its interior, we consider the optimal control problem
	\begin{equation}
	\label{e:opt problem}
	\begin{aligned}
		\minimize_{u}	& \quad \norm{u}_{\Lp 0([\tinit, \tfin])}\\
		\sbjto	 & \quad \begin{cases}
					\dot z(t) = \phi\bigl(z(t), u(t)\bigr)	\text{ for a.e.\ }t\in[\tinit, \tfin],\\
					z(\tinit) = A,\quad z(\tfin) = B,\\
					u:[\tinit, \tfin]\lra\admact \text{ Lebesgue measurable}.
			\end{cases}
	\end{aligned}
	\end{equation}
	Here the \(\Lp 0\)-(semi)norm\footnote{Note that our choice of calling the map \(u\mapsto \norm{u}_{\Lp 0([\tinit, \tfin])}\) a seminorm is technically not precise because this map does not satisfy the positive homogeneity property despite being positive definite and satisfying the triangle inequality. However, here we choose to overload the term seminorm in favour of being pedantic.} of a map \(u:[\tinit, \tfin]\lra\admact\) is defined by the Lebesgue measure of the support of \(u\), i.e.,
	\[
		\norm{u}_{\Lp 0([\tinit, \tfin])} \Let \Leb\Bigl(\bigl\{s\in[\tinit, \tfin]\,\big|\, u(s) \neq 0\bigr\}\Bigr).
	\]
	Observe that if the minimum time to transfer the system states from \(z(\tinit) = A\) to \(z(\tfin) = B\) is larger than the given duration \(\tfin - \tinit\), then the optimal control problem \eqref{e:opt problem} has no solution. Thus, a standing assumption used throughout this work is that there is a feasible solution to \eqref{e:opt problem}. In other words, despite the limited control authority described by the compact set \(\admact\), we shall assume that it is possible to steer the system states from \(A\) to \(B\) in finite time \(\tfin - \tinit\). Observe also that, unlike minimum attention control \`a la \cite{brocke97}, the optimal control problem \eqref{e:opt problem} does not penalize the rate of change of the control. Nonetheless, \eqref{e:opt problem} can be viewed through the looking glass of least attention in the sense that the control is `active' for the least duration of time. The current work investigates optimality in \eqref{e:opt problem} using a nonsmooth maximum principle as summarized in \cite[Chapter 22]{ref:Cla-13}.

	The main contributions and outline of this article are given below:
	\begin{enumerate}[label=(\roman*), leftmargin=*, widest=iii, align=right]
        \item  We show that \eqref{e:opt problem} can be recast in the form of an optimal control problem involving an integral cost with a discontinuous cost function. \label{contrib:1} We apply a non-smooth Pontryagin maximum principle directly to problem \eqref{e:opt problem} and obtain an exact set of necessary conditions for optimality. This result is presented in \secref{s:exact solution}. It characterizes solutions to \eqref{e:opt problem} provided that they exist.
		\item \secref{s:example} sheds further insight into the case where the system dynamics in \eqref{e:sys} are linear. This section also illustrates that, perhaps contrary to intuition, in singular problem instances, \(\Lp 1\)-relaxations may fail to give sparse controls; cf.\ \cite[Chapter 8]{ref:AthansFalb-66}.
		\item The Pontryagin maximum principle gives necessary conditions for an extremum. Naturally, any state-action trajectory satisfying the Pontryagin maximum principle is not necessarily optimal. In \secref{s:exact solution} we provide conditions under which the necessary conditions are also sufficient for optimality. %Under mild conditions, in \secref{s:suff} we demonstrate the existence of a solution to the maximum hands-off control problem \eqref{e:opt problem} satisfying the set of necessary conditions obtained in \secref{s:exact solution}. The primary difficulty here is the discontinuous nature of the instantenous cost function, which precludes application of the standard existence results in optimal control theory. 
		Our proof of optimality follows from inductive methods in optimal control.%This provides conditions under which solutions to the maximum hands-off control problem exist, and therefore satisfy the conditions in \ref{contrib:1}.
	\end{enumerate}  

\paragraph*{Notation:} The notations employed in this article are standard. The Euclidean norm of a vector $z$, belonging to the \(d\)-dimensional Euclidean space \(\R^d\), is denoted by \(\norm{z}\); vectors are treated as column vectors. For a set \(S\) we let \(z\mapsto \indic{S}(z)\) denote the indicator (characteristic) function of the set \(S\) defined to be \(1\) if \(z\in S\) and \(0\) otherwise.

\begin{remark}
	\label{r:comparison}
		The version of the maximum hands-off control problem posed in
                \cite{nagque13a,ref:NagQueNes-14} is slightly different from the one we
                examine in \eqref{e:opt problem} above. Indeed,  
                \cite{ref:NagQueNes-14} studies the following problem: 
		\begin{equation}
		\label{e:opt problem in NagQueNes-14}
		\begin{aligned}
			\minimize_{u}	& \quad \frac{1}{b-a}\sum_{i=1}^m \lambda_i \norm{u_i}_{\Lp 0([\tinit, \tfin])}\\% = m - \tfrac{1}{b-a}\sum_{i=1}^m \int_{\tinit}^{\tfin} \lambda_i \indic{\{0\}}(u_i(s))\,\dd s\\
			\sbjto	& \quad \begin{cases}
						\dot z(t) = \phi\bigl(z(t), u(t)\bigr)	\text{ for a.e.\ }t\in[\tinit, \tfin],\\
						z(\tinit) = A,\quad z(\tfin) = B,\\
						u:[\tinit, \tfin]\lra\admact\text{ Lebesgue measurable},
				\end{cases}
		\end{aligned}
		\end{equation}
		where \(\{\lambda_i\}_{i=1}^m\) are given positive weights. This cost function   features the controls of a multivariable plant as additive terms. In contrast, and by noting that
		\[
			\int_{\tinit}^{\tfin} \indic{\{0\}}(u(s))\,\dd s = \int_{\tinit}^{\tfin} \prod_{i=1}^m \indic{\{0\}}(u_i(s))\,\dd s,
		\]
		(where the \(0\) on the left-hand side belongs to \(\R^m\) and the one on the right-hand side belongs to \(\R\),) the cost function \eqref{e:opt problem} features a multiplicative form  in the controls. The techniques exposed for \eqref{e:opt problem} in the sequel carry over in a straightforward fashion to \eqref{e:opt problem in NagQueNes-14}. In order not to blur the message of this article, we stick to the simpler case of \eqref{e:opt problem}. \hfs
	\end{remark}

%%%%%%%%%%%%%%%%%%%%%%%%%%%%%%%%%%%%%%%%%%%%%%%%%%%%%%%%%%%%%%%%%%%%%%%%%%%%%%%%
\section{Necessary Conditions for Optimality}
\label{s:exact solution}

	By definition, we have
	\begin{equation}
	\label{e:lzeronorm as cost}
		\norm{u}_{\Lp 0([\tinit, \tfin])} = \tfin - \tinit - \int_{\tinit}^{\tfin} \indic{\{0\}}(u(s))\,\dd s.
	\end{equation}
	Since \(\tinit\) and \(\tfin\) are fixed, the minimization of
        \(\norm{u}_{\Lp 0([\tinit, \tfin])}\) in \eqref{e:opt problem} is
        equivalent to the minimization of \(-\int_{\tinit}^{\tfin}
        \indic{\{0\}}(u(s))\,\dd s\). In view of this, we rewrite the optimal control problem \eqref{e:opt problem} as
	\begin{equation}
	\label{e:Clarke opt problem}
	\begin{aligned}
		\minimize_{u}	& \quad -\int_{\tinit}^{\tfin} \indic{\{0\}}(u(s))\,\dd s\\
		\sbjto	& \quad \begin{cases}
				\dot z(t) = \phi\bigl(z(t), u(t)\bigr)\quad \text{for a.e.\ }t\in[\tinit, \tfin],\\
				z(\tinit) = A, \quad z(\tfin) = B,\\
				u:[\tinit, \tfin]\lra\admact \text{ Lebesgue measurable}.
			\end{cases}
	\end{aligned}
	\end{equation}
	We have the following Proposition:
	\begin{proposition}
	\label{p:exact solution}
		Associated to every solution \([\tinit, \tfin]\ni t\mapsto \bigl(z\opt(t), u\opt(t)\bigr)\) to \eqref{e:opt problem}  there exist an absolutely continuous curve \([\tinit, \tfin]\ni t\mapsto p(t)\in\R^d\) and a number \(\eta = 0\) or \(1\) such that for a.e.\ \(t\in[\tinit, \tfin]\):
			\begin{equation}
			\label{e:exact:solution}
				\left\{
				\begin{aligned}
					\dot z\opt(t)& = \phi\bigl(z\opt(t), u\opt(t)\bigr),\quad z\opt(\tinit) = A,\; z\opt(\tfin) = B,\\
					\dot p(t)	& = -\Bigl( \partial_z \phi\bigl(z\opt(t), u\opt(t)\bigr) \Bigr)\transp p(t),\\
					u\opt(t)& \in \argmax_{v\in \admact}\Bigl\{\inprod{p(t)}{\phi\bigl(z\opt(t), v\bigr)} + \eta \indic{\{0\}}(v)\Bigr\},
				\end{aligned}
				\right.\\
			\end{equation}
			and 
			\begin{equation}
			\label{e:exact:solution 2}
				\bigl(\eta, p(t)\bigr)\neq (0, 0)\in\R\times\R^d\quad \text{for all }t\in[\tinit, \tfin].
			\end{equation}
	\end{proposition}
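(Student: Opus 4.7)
The plan is to invoke the nonsmooth Pontryagin maximum principle of \cite[Chapter 22]{ref:Cla-13} directly on the reformulated problem \eqref{e:Clarke opt problem}. The key structural observation is that the integrand \(-\indic{\{0\}}(u)\) depends only on the control \(u\), so although it is discontinuous, the adjoint equation will involve only partial derivatives of the dynamics \(\phi\) in \(z\), which are well-behaved.

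First I would recast \eqref{e:Clarke opt problem} in standard Mayer form by introducing the augmented state \(z_0(t) \Let -\int_{\tinit}^{t}\indic{\{0\}}(u(s))\,\dd s\), with \(\dot z_0(t) = -\indic{\{0\}}(u(t))\) and \(z_0(\tinit) = 0\), so that minimizing the integral cost is equivalent to minimizing the terminal value \(z_0(\tfin)\). The resulting extended system has compact control set \(\admact\) and \(C^1\) dependence in the state variables. The augmenting integrand, though discontinuous, is bounded and upper semicontinuous in \(v\), and the set of admissible velocities
\[
\bigl\{\bigl(\phi(z,v),\,-\indic{\{0\}}(v)\bigr) : v \in \admact \bigr\}
\]
is closed for each \(z\); indeed, the two possible values of the augmenting coordinate split cleanly along the partition of \(\admact\) into \(\{0\}\) and \(\admact\setminus\{0\}\). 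These are precisely the structural ingredients required by the hypotheses of the nonsmooth maximum principle in \cite[Chapter 22]{ref:Cla-13}.

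Applying the principle yields an absolutely continuous costate \((p, p_0)\) and a cost multiplier \(\eta\in\{0,1\}\). Since the augmented dynamics is independent of \(z_0\), the additional costate is constant and may be identified with \(\eta\). The adjoint equation in the \(p\) component reads \(\dot p(t) = -\bigl(\partial_z\phi(z\opt(t), u\opt(t))\bigr)\transp p(t)\), which is the second line of \eqref{e:exact:solution}; the pointwise Hamiltonian maximization asserts that \(u\opt(t)\) maximizes \(v\mapsto \inprod{p(t)}{\phi(z\opt(t), v)} + \eta\indic{\{0\}}(v)\) over \(\admact\), which is the third line.

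The non-triviality assertion \eqref{e:exact:solution 2} then follows from a short ODE observation: the adjoint equation is linear and homogeneous in \(p\), so if \(p(t_0)=0\) for some \(t_0\), then by uniqueness \(p\equiv 0\) on \([\tinit, \tfin]\); combined with the standard non-triviality \((\eta, p(\cdot))\not\equiv(0,0)\) furnished by the maximum principle, this forces \((\eta, p(t))\neq(0,0)\) for every \(t\). The main subtlety of the proof is justifying the applicability of the nonsmooth maximum principle in the presence of the indicator-function cost; this is handled by the closed-graph observation above, which succeeds precisely because the discontinuity of the cost is confined to the control variable and takes only finitely many values.
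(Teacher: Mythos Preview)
Your overall strategy---invoke Clarke's nonsmooth maximum principle \cite[Theorem~22.26]{ref:Cla-13} on the reformulation \eqref{e:Clarke opt problem}---matches the paper's. The paper, however, applies the theorem in its native Bolza form: the running cost \(\Lambda(u) = -\indic{\{0\}}(u)\) is bounded and lower semicontinuous (since \(\{0\}\) is closed, \(\indic{\{0\}}\) is upper semicontinuous, so its negative is lower semicontinuous), and this is exactly the hypothesis the theorem places on the integrand. No Mayer reduction is needed; the adjoint equation, the Hamiltonian maximization, and the nontriviality condition then drop out directly, with upper semicontinuity of \(v\mapsto \eta\indic{\{0\}}(v)\) and compactness of \(\admact\) ensuring the \(\argmax\) is attained.

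Your detour through Mayer form is where the technical slips occur. First, the augmenting integrand \(-\indic{\{0\}}\) is \emph{lower}, not upper, semicontinuous; the upper semicontinuity you want later is that of \(v\mapsto \eta\indic{\{0\}}(v)\) inside the Hamiltonian. Second---and more seriously, since you identify it as ``the main subtlety''---your claim that the augmented velocity set
\[
\bigl\{\bigl(\phi(z,v),\, -\indic{\{0\}}(v)\bigr) : v \in \admact\bigr\}
\]
is closed is false in general: if \(v_n \to 0\) with \(v_n \neq 0\), then \(\bigl(\phi(z,v_n), 0\bigr) \to \bigl(\phi(z,0), 0\bigr)\), and this limit belongs to the set only when \(\phi(z,\cdot)\) fails to be injective at \(0\). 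Fortunately this closedness is not required; the Bolza-form hypothesis on \(\Lambda\) already absorbs the discontinuity. Your ODE-uniqueness argument for upgrading nontriviality to the pointwise statement \eqref{e:exact:solution 2} is correct, though Clarke's theorem (as quoted in the paper) already delivers \(\bigl(\eta, p(t)\bigr)\neq 0\) for every \(t\).
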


	A proof of Proposition \ref{p:exact solution} is provided in Appendix \ref{s:app}.

	\begin{remark}
	\label{r:solutions}
		Proposition \ref{p:exact solution} gives a set of necessary conditions for optimality of state-action trajectories \(t\mapsto\bigl(z\opt(t), u\opt(t)\bigr)\) in the same spirit as the standard first order necessary conditions for an optimum in a finite-dimensional optimization problem. We see that the ordinary differential equations (o.d.e.'s) describing the system state $z\opt$ and its adjoint $p$ constitute a set of \(2d\)-dimensional o.d.e.'s with \(2d\) constraints. This amounts to a well-defined boundary value problem in the sense of Carath\'eodory \cite[Chapter 1]{ref:Fil-88}. Indeed, the control map \(u\opt\) is Lebesgue measurable, and depends parametrically on \(p\); therefore, the right-hand side of \eqref{e:sys} under \(u\opt\) satisfies the Carath\'eodory conditions \cite[Chapter 1]{ref:Fil-88} that guarantee existence of a Carath\'eodory solution.
	\end{remark}

	\begin{remark}
		Numerical solutions to differential equations such as the ones in \eqref{e:exact:solution} are typically carried out by what are known as the shooting and multiple shooting methods. This is an active area of research; see \cite[Chapter 3]{ref:Bet-09} for a detailed discussion.
	\end{remark}

	\begin{remark}
	\label{r:extremal lift}
		The quadruple \(\bigl(\eta, p(\cdot), z\opt(\cdot), u\opt(\cdot)\bigr)\) is known as the \emph{extremal lift} of the optimal state-action trajectory \(\bigl(z\opt(\cdot), u\opt(\cdot)\bigr)\). The scalar \(\eta\) is known as the \emph{abnormal multiplier}. If \(\eta = 1\), then the extremal \(t\mapsto \bigl(\eta, p(t), z\opt(t), u\opt(t)\bigr)\) is said to be normal; if \(\eta = 0\), then the extremal is said to be abnormal. The scalar \(\eta\) is a Lagrange multiplier associated to the instantaneous cost. Interestingly, the curves for which \(\eta = 0\) are not detected by the standard calculus of variations approach \cite{ref:Cla-13}. The reason is that in calculus of variations the underlying assumption is that there are curves ``close'' to the optimal ones satisfying the same boundary conditions. But this assumption fails whenever the optimal curves are isolated in the sense that there is only one curve satisfying the given boundary conditions. In that case, a comparison between the costs corresponding to this optimal curve and other neighbouring curves turns out to be impossible to perform. The Pontryagin maximum principle, however, detects such abnormal curves and characterizes them \cite{ref:Cla-13}. At the level of generality of Proposition \ref{p:exact solution} we cannot rule out the presence of abnormal extremals in our setting.
	\end{remark}

		 Proposition \ref{p:exact solution} characterizes the necessary conditions for optimality of maps $[\tinit, \tfin]\ni t\mapsto \bigl(z(t), u(t)\bigr)$  when the map $\phi$ in \eqref{e:sys} is non-linear. In the following section, we will further examine the special case of linear plant dynamics.

%%%%%%%%%%%%%%%%%%%%%%%%%%%%%%%%%%%%%%%%%%%%%%%%%%%%%%%%%%%%%%%%%%%%%%%%%%%%%%%%
\section{Linear Plant Models}
\label{s:example}
	In this section we apply the results of \secref{s:exact solution}
        to time-invariant linear systems described by: 
	\begin{equation}
	\label{e:linsys}
		\dot z(t) = \phi\big( z(t),u(t)\big)=F z(t) + G u(t),
	\end{equation}
	where \(F\in\R^{d\times d}\) and \(G\in\R^{d\times m}\) are given. As before, we assume that the time difference \(\tfin - \tinit\) is larger than the minimum duration required to execute the transfer of the state \(z(\tinit) = A\) to \(z(\tfin) = B\). Then we can use Proposition~\ref{p:exact solution} to obtain the following necessary condition for optimality:

	\begin{corollary}
	\label{c:linear case}
		Consider the optimal control problem \eqref{e:opt problem} with \(\phi\) of the form \eqref{e:linsys}. Then associated to every solution \([\tinit, \tfin]\ni t\mapsto \bigl(z\opt(t), u\opt(t)\bigr)\) to \eqref{e:opt problem} there exists a number \(\eta = 0\) or \(1\) and a vector \(\hat p\in\R^d\) such that: If \(\eta = 1\), then
			\begin{equation*}
%			\label{e:linear case:full solution}
			\begin{aligned}
				\dot z\opt(t)	& = F z\opt(t) + G u\opt(t), \qquad z\opt(\tinit) = A, \; z\opt(\tfin) = B,\\
				u\opt(t)		& \in \begin{cases}
					\displaystyle{\argmax_{v\in\admact}} \inprod{G\transp \epower{(\tfin - t)F\transp}\hat p}{v}	& \text{if }\displaystyle{\max_{v\in \admact}}\inprod{G\transp \epower{(\tfin - t)F\transp}\hat p}{v} > 1,\\
					0		& \text{otherwise}.
				\end{cases}
			\end{aligned}
			\end{equation*}
		If \(\eta = 0\), then in the above we simply have
		\[
			u\opt(t) \in \argmax_{v\in\admact} \inprod{G\transp \epower{(\tfin - t)F\transp} \hat p}{v}
		\]
		and \(\hat p \neq 0\).
	\end{corollary}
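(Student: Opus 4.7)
The plan is to specialize Proposition~\ref{p:exact solution} to the linear dynamics \eqref{e:linsys} and carry out the three simplifications that the linear structure offers: the adjoint equation becomes a linear time-invariant ODE that can be integrated in closed form, the Hamiltonian maximization over \(v\) splits into a piece linear in \(v\) plus the indicator penalty, and the drift term \(Fz^{*}(t)\) can be dropped from the argmax because it is independent of \(v\).

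First I would compute \(\partial_z \phi(z,u) = F\), so the adjoint ODE in \eqref{e:exact:solution} reads \(\dot p(t) = -F^{\top} p(t)\). Integrating backwards from \(\tfin\) gives \(p(t) = \epower{(\tfin - t)F^{\top}} \hat p\) with \(\hat p \Let p(\tfin)\); this is the \(\hat p\) announced in the statement. Next, the maximization condition in \eqref{e:exact:solution} becomes
\[
u\opt(t) \in \argmax_{v \in \admact}\Bigl\{\inprod{p(t)}{Fz\opt(t) + Gv} + \eta\indic{\{0\}}(v)\Bigr\},
\]
and since \(\inprod{p(t)}{Fz\opt(t)}\) is a constant with respect to \(v\), it may be dropped; the remaining objective is \(\inprod{G^{\top} p(t)}{v} + \eta\indic{\{0\}}(v) = \inprod{G^{\top} \epower{(\tfin - t)F^{\top}}\hat p}{v} + \eta\indic{\{0\}}(v)\).

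I would then split the analysis according to the value of \(\eta\). If \(\eta = 0\), the argmax is exactly as stated; moreover, the non-triviality condition \eqref{e:exact:solution 2} rules out \(p \equiv 0\), and since \(p(t) = \epower{(\tfin - t)F^{\top}}\hat p\) with \(\epower{(\tfin-t)F^{\top}}\) invertible, this forces \(\hat p \neq 0\). If \(\eta = 1\), write \(M(t) \Let \max_{v \in \admact}\inprod{G^{\top}\epower{(\tfin - t)F^{\top}}\hat p}{v}\); note \(M(t) \geq 0\) since \(0 \in \admact\). Evaluating the objective at \(v=0\) gives the value \(1\), while at any maximizer \(v^{*}\) of the linear part it gives \(M(t)\). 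Hence if \(M(t) > 1\) any maximizer of the whole objective must lie in \(\argmax_{v \in \admact}\inprod{G^{\top}\epower{(\tfin - t)F^{\top}}\hat p}{v}\), whereas if \(M(t) \leq 1\) the choice \(v = 0\) is a maximizer and so \(u\opt(t) = 0\) is admissible; this yields exactly the case distinction in the statement.

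No step is really a deep obstacle; the only delicate point is the tie-breaking at times \(t\) where \(M(t) = 1\) (and, for \(\eta = 0\), times where the argmax of the linear functional is not a singleton). These measure-zero ambiguities are absorbed by the set-valued ``\(\in\)'' in the conclusion, exactly as in Proposition~\ref{p:exact solution}, so no extra argument is required beyond pointing them out.
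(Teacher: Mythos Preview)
Your proposal is correct and follows essentially the same route as the paper: specialize Proposition~\ref{p:exact solution} to the linear dynamics, solve the adjoint equation explicitly as \(p(t)=\epower{(\tfin-t)F^{\top}}\hat p\), drop the \(v\)-independent term \(\inprod{p(t)}{Fz\opt(t)}\) from the Hamiltonian maximization, and then treat the cases \(\eta=0\) and \(\eta=1\) separately, with the nontriviality condition forcing \(\hat p\neq 0\) in the abnormal case. In fact your write-up is slightly more explicit than the paper's own proof, which stops at the compact formula \(u\opt(t)\in\argmax_{v\in\admact}\bigl\{\inprod{G^{\top}\epower{(\tfin-t)F^{\top}}\hat p}{v}+\indic{\{0\}}(v)\bigr\}\) and leaves the comparison between \(M(t)\) and \(1\) to the reader; you carry that comparison out and also flag the tie set \(\{t:M(t)=1\}\), which is the right thing to do.
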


	Observe that in the normal case of \(\eta = 1\), we have sparse controls since the optimal controls are explicitly set to \(0\). We provide a proof of Corollary \ref{c:linear case} in Appendix \ref{s:app}, and note that the message of Remark \ref{r:solutions} applies accordingly to Corollary \ref{c:linear case}.

	\begin{remark}
		For the particular case where the control inputs are constrained to lie in the closed unit ball (with respect to the Euclidean norm) centered at \(0\in\R^m\), we have the particularly simple formula for the optimal control in the context of Corollary \ref{c:linear case} if \(\eta = 1\):
			\[
				u\opt(t) = 
					\begin{cases}
						0												& \text{if }\norm{G\transp \epower{(\tfin - t)F\transp}\hat p} < 1,\\
						\frac{G\transp\epower{(\tfin - t)F\transp}\hat p}{\norm{G\transp\epower{(\tfin - t)F\transp}\hat p}}	& \text{otherwise}.
					\end{cases}
			\]
			In the further special case of the control dimension being \(1\) and \(\admact = [-1, 1]\), we have
                        \begin{equation}
                          \label{eq:5}
				u\opt(t) = 
					\begin{cases}
						1	& \text{if }G\transp \epower{(\tfin - t)F\transp}\hat p \ge 1,\\
						0	& \text{if }\abs{G\transp\epower{(\tfin - t)F\transp}\hat p} < 1,\\
						-1	& \text{if }G\transp \epower{(\tfin - t)F\transp}\hat p \le -1.
					\end{cases}
   		\end{equation}
		Both the optimal controls above illustrate the \emph{bang-off-bang} nature of the optimal control mentioned in \cite[Section IV.B]{ref:NagQueNes-14}. Of course, the precise combination of the zeros and ones will depend on the initial and final states, as illustrated below.
	\end{remark}

\section{Examples}
\label{s:examples}
	We illustrate our results in this section with two examples:

	\begin{example}
	\label{example:normal}
		Consider the following scalar linear plant
		\[
			\dot z(t) = u(t),
		\]
		with initial and final conditions given by \(z(0) = 3\), \(z(5) = 0\). The set of admissible controls is given by \(\admact = [-1, 1]\). We seek a control that is feasible given the preceding conditions, and that is set to \(0\) for the maximal duration of time. In the context of this simple example it is clear that any control that is equal to \(-1\) on a Lebesgue measurable subset of \([0, 5]\) of measure \(3\) and \(0\) elsewhere is feasible. In addition, any such control achieves the minimum cost in the problem \eqref{e:Clarke opt problem}, and the corresponding minimum cost is precisely \(-2\).

		We verify the conditions of Corollary \ref{c:linear case} in the above setting: The adjoint equation is a constant since the Hamiltonian is independent of the space variable. Therefore, \(p(t) = p_0\) for some \(p_0\in\R\) and all \(t\in[0, 5]\). Since 
		\[
			u\opt(t) \in \argmax_{v\in[-1, 1]} \bigl\{ p_0 v + \eta \indic{\{0\}}(v)\bigr\},
		\]
		we have
		\[
			u\opt(t) \in \begin{cases}
				\{\sgn(p_0)\}	& \text{if }\eta = 0,\\
				\begin{cases}
					\{0\}			& \text{if }\abs{p_0} < 1,\\
					\{\sgn(p_0)\}	& \text{if }\abs{p_0} > 1,\\
					\{0, 1\}		& \text{if }p_0 = 1,\\
					\{0, -1\}		& \text{if }p_0 = -1,
				\end{cases}		& \text{if }\eta = 1.
			\end{cases}
		\]
		The first case of \(\eta = 0\) is ruled out because the corresponding constant control, regardless of the value of the constant, is not feasible. In other words, our probelm conforms to the normal case. We rule out the two constant controls corresponding to \(\abs{p_0} < 1\) and \(\abs{p_0} > 1\) since they too are not feasible. For the same reason we also eliminate all controls taking values in \(\{0, 1\}\). The only remaining possibility corresponds to any feasible control taking values in \(\{0,-1\}\). We described an uncountable family of such controls above, and therefore, each of these controls satisfies the assertions of Corollary \ref{c:linear case}.
	\end{example}

	\begin{example}
	\label{example:singular}
		Consider the following linear plant:
		\begin{equation}
		\dot{z}(t) = \begin{pmatrix}0&1\\0&0\end{pmatrix}z(t) + \begin{pmatrix}0\\1\end{pmatrix}u(t),
		\quad   z(0) = \begin{pmatrix}\xi_1\\\xi_2\end{pmatrix}.
		\label{eq:linear_plant}
		\end{equation}
		We seek a control that drives the states $t\mapsto z(t)$ from a given initial state \(z(0)\) to $z(T)=0$. The admissible action set is \(\admact = [-1, 1]\), and $T>0$ is larger than the minimum time required to enable the above manoeuvre. The control is required to be such that it is equal to \(0\) for the maximal possible duration of time. In particular, we consider the following choices
\begin{equation}
  \label{eq:4}
  T=5, \quad \xi_1=10,\, \xi_2=  -3.
\end{equation}
In the above we have
\[
F= \begin{pmatrix}0&1\\0&0\end{pmatrix} \quad \text{and}\quad G = \begin{pmatrix}0\\1\end{pmatrix}.
\]
It is immediate that \(F^2 = 0 \); hence 
                \begin{equation}
                 \epower{(\tfin - t)F\transp} = \begin{pmatrix} 1 & 0 \\ b - t & 1 \end{pmatrix}.
                \end{equation}
               In view of \eqref{e:exact:solution}, we note that the adjoint trajectory satisfies
                \begin{equation}
                \dot p(t)	 = - \begin{pmatrix} 0 & 0 \\ 1 & 0 \end{pmatrix} p(t),
                \end{equation}
                and hence \( p_1(t) = \hat p_1 \) and \(p_2(t) =  \hat p_1 ( b - t)  + \hat p_2  \) for some \(\hat p_1, \hat p_2\in\R\).

		We provide a feasible control first: Consider a control of the form
		\begin{equation}
		\label{eq:5primeprimeprime}
			u_\circ(t) = 
				\begin{cases}
					0	& \text{if } t \in [0, \theta_1[,\\
					1 	& \text{if } t \in [\theta_1, \theta_2[,\\
					0	& \text{if } t \in [\theta_2, T],\\
				\end{cases}
		\end{equation}
		for some $0 \leq \theta_1 \leq \theta_2 \leq T$ to be determined. Under this control we compute the state trajectory, and from the boundary conditions we can obtain the precise values of \(\theta_1\) and \(\theta_2\). In fact, it holds that
		\[
			z_2(t) = -3 + \int_0^t u_\circ(s)\, \dd s
		\]
		and 
		\[
		z_1(t) = 10  - 3 t + \int_0^t \int_0^s u_\circ(\tau)\,\dd\tau\,\dd s.
		\]
		Straightforward computations now lead to \( \theta_1 = \tfrac{11}{6} \) and \( \theta_2 = \tfrac{29}{6} \). The cost incurred by the above control is, therefore, \(-2\).

		We next establish that the minimum cost for our problem is precisely \(-3\).\footnote{This slick argument was pointed out to us by Witold Respondek.} Indeed, consider the evolution of the second state: \(\dot z_2(t) = u(t)\). Since \(z_2(0) = -3\), \(z_2(5) = 0\), and the admissible control set is \([-1, 1]\), it follows that any control that achieves this manoeuvre must spend at least \(3\) units of time with non-zero control values. In other words, the minimum cost is, indeed, \(-2\).

        %Let us show that 3 is the minimum value of our control problem among the controls taking values in \(\{-1, 0, 1\} \). For this let \( I_1, I_0, I_{-1} \) be the subsets of the interval \( [0, 5] \) such that the control \(u(t) = i  \) for \( t \in I_{i} \), where \( i \in \{-1, 0, 1 \} \).  Now It is  clear that the \(\Lp 0\)-cost of the control \(u(\cdot) \) is \( \Leb (I_1)  + \Leb (I_{-1}) \).  Since \(z_2(5) = 0\), we have 
        %\[
        %0 = -3 + \int_0^t u(s) \, \dd s = -3 +  \Leb (I_1 ) - \Leb (I_{-1}) 
        %\]
        %and hence \( \Leb (I_1 ) - \Leb (I_{-1}) = 3 \). Combining these two equalities, we note that  the minimum \(\Lp 0\)-cost has to be 3. In fact, for any control to be \(\Lp 0 \)-optimal, the integral must be 3 and the set where the control takes zero value should have a measure 2 or more. This immediately forces that our control is, indeed, optimal.
        
        Suppose \( \eta = 0 \). Then the optimal control satisfies
        \[
        	u\opt ( t) \in \argmax_{v\in[-1, 1]} \{  v(p_1 (5-t) + \hat p_2) \} 
        \]
        according to Corollary \ref{c:linear case}. Note that both \( \hat p_1, \hat p_2 \) cannot be zero simultaneously. %Hence  \( u\opt (t) \in \{ -1, 1 \} \)  for each \(t \in [0, 5] \).  Clearly any such control can be feasible control, let alone optimal. 
		Thus \(\eta = 1\), i.e., our problem corresponds to the normal case.
                
		Using the result of Corollary \ref{c:linear case} in \eqref{eq:5} we obtain the following necessary conditions for \(\Lp 0 \)-optimal controls in this normal case:
		\begin{equation}
		\label{eq:5prime}
				u\opt(t) \in  
					\begin{cases}
						\{ 1 \}	& \text{if } \hat p_1 (5-t) + \hat p_2  > 1,\\
						\{ 0, 1 \}	& \text{if } \hat p_1 (5-t) + \hat p_2  = 1,\\
						\{ 0 \}	& \text{if }\abs{\hat p_1 (5-t) + \hat p_2} < 1,\\
						\{ -1, 0 \}	& \text{if } \hat p_1 (5-t) + \hat p_2  = -1,\\
						\{ -1 \}	& \text{if }\hat p_1 (5-t) + \hat p_2 \le -1,
					\end{cases}
   		\end{equation}
		for \(t \in [0, 5] \) and for some \( \hat p \Let (\hat p_1 , \hat p_2 ) \). In view of \eqref{e:exact:solution 2}, it is possible that both \( \hat p_1 , \hat p_2 \)  are zero, but in this case \(u\opt (t) \equiv 0 \), which is not a feasible control. Therefore, \( \hat p \neq 0 \).  Since the function \(t\mapsto \hat p_1 (5-t) + \hat p_2 \) is affine, it is monotone --- decreasing, increasing, or constant, except possibly at the instants \(t\) at which \(\hat p_1 (5-t) + \hat p_2 = 1 \) or \(-1\). Thus, in this exceptional situation, the control will be monotone almost everywhere.   Straightforward calculations exhausting all corresponding combinations of switching controls show that no such control is feasible! However if \(\hat p_1 = 0\) and \(\hat p_2\) is either \(1\) or \(-1\), we have
		\[
		u\opt (t) \in \begin{cases}
					\{0, 1 \} & \text{if } \hat p_2 = 1,\\
					\{-1, 0 \} & \text{if } \hat p_2 = -1.
					\end{cases}
		\] 
		The feasible control \(u_\circ(\cdot) \) in \eqref{eq:5primeprimeprime} satisfies this situation, and hence we conclude that \(\hat p_1 = 0 \) and \(\hat p_2 = 1 \).

        The \(\Lp 0\)-optimal control corresponding to this problem is illustrated via a solid line in Figure \ref{fig:singular}. 

		Interestingly, it follows from \cite[Control Law 8-3]{ref:AthansFalb-66} that the associated $\Lp 1$ control problem will be singular\footnote{Here ``singularity'' is meant in the sense of \cite{ref:AthansFalb-66}; it is not a universally accepted terminology!} if the components of the initial state satisfy:
		\[
		 \xi_1>\frac{\xi_2^2}{2}, \quad \xi_2<0, \quad -\frac{\xi_2}{2}-\frac{\xi_1}{\xi_2}\geq T.
		\]
		in which case the $\Lp 1$-optimal control is not necessarily $\Lp 0$-optimal. In fact, if we choose
	parameters as in~(\ref{eq:4}), then
		the $\Lp 1$-optimal control (obtained via numerical
                optimization) is as shown in dashed lines in Figure \ref{fig:singular}.
		\begin{Figure}
		\centering
		 \includegraphics[width=\linewidth]{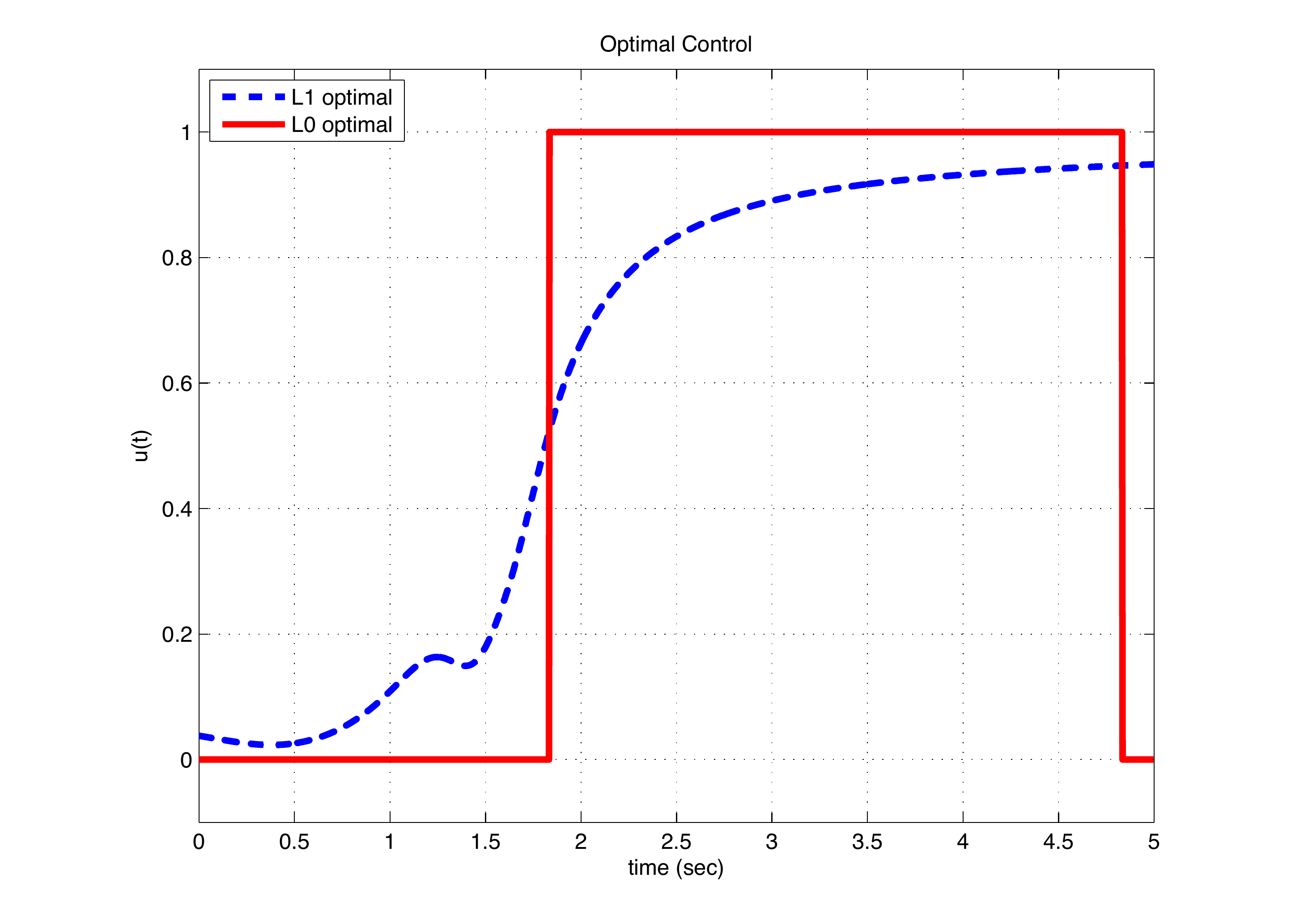}
		 \captionof{figure}{$\Lp 0$ and \(\Lp 1\)-optimal controls corresponding to Example \ref{example:singular}.}
		 \label{fig:singular}
		\end{Figure}
		Quite evidently and contrary to what one might expect (cf.\ \cite[Chapter 8]{ref:AthansFalb-66}), the \(\Lp 1\)-optimal controls are not sparse and hence cannot be $\Lp 0$-optimal. This observation opens the question of \emph{inverse optimality} of bang-off-bang solutions. It also relates to the more general problem of elucidating how the choice of cost functions leads to properties of solutions.\hfs
	\end{example}

%%%%%%%%%%%%%%%%%%%%%%%%%%%%%%%%%%%%%%%%%%%%%%%%%%%%%%%%%%%%%%%%%%%%%%%%%%%%%%%%
\section{Sufficient conditions for optimality}
\label{s:suff}
	Using a non-smooth maximum principle in \secref{s:exact solution} we established the necessary conditions \eqref{e:exact:solution} for solutions to \eqref{e:opt problem}. The Pontryagin maximum principle provides necessary conditions for optimality, and as such, state-action trajectories satisfying these conditions are not necessarily optimal. In this section we provide conditions under which, in the case of our maximum hands-off control problem \eqref{e:opt problem}, the necessary conditions of the maximum principle are also sufficient for optimality.
	% Notice, however, that so far we have not given conditions for the existence of optimal state-action trajectories satisfying \eqref{e:exact:solution}. The objective of this section is to establish sufficient conditions for the existence of state-action trajectories solving the optimal control problem \eqref{e:opt problem}.

	\begin{remark}
		This section does \emph{not} deal with existence of optimal controls; the latter appears to be a difficult problem in our case. In particular, the standard existence theorems for Bolza problems, e.g., \cite[Theorem 5.2.1]{ref:BrePic-07}, etc, do not apply directly to \eqref{e:opt problem} on account of the discontinuous nature of the instantaneous cost function \(-\indic{\{0\}}(\cdot)\).
	\end{remark}

	We have the following result:
	\begin{theorem}
	\label{t:existence}
		Consider the optimal control problem \eqref{e:opt problem} along with its associated data. Suppose that for each \(u\in\admact\) the map \(\phi(\cdot, u)\) is affine. Suppose that a normal (\(\eta = 1\)) state-action trajectory
		\[
			[\tinit, \tfin]\ni t\mapsto \bigl(z\opt(t), u\opt(t)\bigr)\in\R^d\times\admact
		\]
		satisfying the conditions of Proposition \ref{p:exact solution} exists. Then this state-action trajectory is locally optimal.
	\end{theorem}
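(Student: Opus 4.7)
The plan is to deduce from the affinity of $\phi(\cdot, u)$ and normality $\eta = 1$ that $u\opt$ exhibits a bang-off-bang structure with $\abs{u\opt(t)}$ uniformly bounded away from $0$ on its support, and then to conclude local optimality in $\Lp{\infty}$ by a direct support-inclusion argument.

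Writing $\phi(z, u) = F(u) z + g(u)$ and setting
\[
H_0(v, t) \Let \inprod{p(t)}{F(v) z\opt(t) + g(v)}, \qquad \Delta(t) \Let \sup_{v \in \admact} H_0(v, t) - H_0(0, t) - 1,
\]
the maximum condition in \eqref{e:exact:solution} with $\eta = 1$ yields $u\opt(t) = 0$ wherever $\Delta(t) < 0$, and $u\opt(t) \in \argmax_{v \in \admact} H_0(v, t)$ with strict gap $H_0(u\opt(t), t) > H_0(0, t) + 1$ wherever $\Delta(t) > 0$. The key structural step is to produce $\rho > 0$ such that $\abs{u\opt(t)} \ge \rho$ for a.e.\ $t \in S\opt \Let \{t : u\opt(t) \neq 0\}$. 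This I would obtain by contradiction: if $u\opt(t_n) \to 0$ along a sequence in $S\opt$, continuity of $H_0$ in $v$ (from continuity of $F, g$ on the compact $\admact$ and of $p, z\opt$ on $[\tinit, \tfin]$) forces $H_0(u\opt(t_n), t_n) - H_0(0, t_n) \to 0$, contradicting the uniform gap strictly greater than $1$. An induction across the (generically finitely many) maximal open intervals comprising $\{\Delta > 0\}$ --- the ``inductive methods'' alluded to by the authors --- extends the bound past boundary/edge cases.

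With the uniform lower bound $\rho > 0$ secured, let $u \colon [\tinit, \tfin] \lra \admact$ be any feasible control satisfying $\norm{u - u\opt}_{\Lp{\infty}([\tinit, \tfin])} < \rho$. For a.e.\ $t \in S\opt$ one has $\abs{u(t)} \ge \abs{u\opt(t)} - \norm{u - u\opt}_{\Lp{\infty}} > 0$, so $\operatorname{supp}(u) \supseteq S\opt$ up to a Lebesgue-null set, and therefore
\[
\norm{u}_{\Lp 0([\tinit, \tfin])} = \Leb(\operatorname{supp}(u)) \ge \Leb(S\opt) = \norm{u\opt}_{\Lp 0([\tinit, \tfin])}.
\]
In view of \eqref{e:lzeronorm as cost}, this is exactly the desired inequality of costs. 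The hard step is the uniform lower bound $\rho > 0$ when $\argmax H_0(\cdot, t)$ is multi-valued on singular arcs of positive measure: the strict-gap argument above rules out pointwise accumulation at $0$ on compact subintervals of $\{\Delta > 0\}$, but a careful measurable selection realising this uniformly over all of $S\opt$ is what the piecewise/inductive analysis in the authors' proof is designed to handle.
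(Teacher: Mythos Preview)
Your approach is genuinely different from the paper's, and the core of it is correct, but you have misread where the paper's effort lies. The paper's proof is three lines: the affinity hypothesis on $\phi(\cdot, u)$ makes the Hamiltonian $z \mapsto \inprod{p(t)}{\phi(z, u\opt(t))} + \indic{\{0\}}(u\opt(t))$ affine (hence concave) in the state variable $z$, and then Clarke's standard sufficiency result \cite[Theorem~24.1, Corollary~24.2]{ref:Cla-13} is invoked directly to conclude that the normal extremal is a local minimizer. There is no bang-off-bang analysis, no support inclusion, no induction. The phrase ``inductive methods in optimal control'' in the introduction does not describe what the actual proof does.

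Your argument is more hands-on and avoids appealing to a black-box sufficiency theorem, which is a genuine advantage. Moreover, the contradiction step establishing the uniform lower bound $\rho$ is already complete as you first wrote it; the later hedging about singular arcs, measurable selections, and a ``piecewise/inductive analysis'' is unnecessary. Whenever $u\opt(t)\neq 0$, the Hamiltonian maximum condition forces $H_0(u\opt(t),t) - H_0(0,t) \ge 1$, and joint continuity of $H_0$ in $(v,t)$ (from continuous differentiability of $\phi$ and continuity of $p$, $z\opt$) together with compactness of $[\tinit,\tfin]$ immediately rules out any sequence with $u\opt(t_n)\to 0$ along $S\opt$, regardless of whether $\Delta(t_n)$ is strictly positive or zero. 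One caveat worth recording: your argument yields local optimality in the $L^\infty$ distance on \emph{controls}, whereas Clarke's cited theorem gives local optimality in the supremum norm on \emph{state trajectories}; the theorem statement does not specify which sense is intended, so the two routes deliver related but not identical conclusions.
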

	\begin{proof}%[Proof of Theorem \ref{t:existence}]
		By assumption \(\eta = 1\), i.e., we have the normal case, and the map \(\phi(\cdot, u)\) is affine for every \(u\in\admact\). This leads to concavity\footnote{Recall that a map \(\psi:\R^d\lra\R\) is concave if for every \(x_1, x_2\in\R^d\) and every \(\alpha\in[0, 1]\) we have \(\psi\bigl((1-\alpha)x_1 + \alpha x_2\bigr) \ge (1-\alpha)\psi(x_1) + \alpha\psi(x_2)\).} of the Hamiltonian function 
		\[
			z\mapsto H\bigl(z, p(t), u\opt(t)\bigr) = \inprod{p(t)}{\phi\bigl(z, u\opt(t)\bigr)} + \indic{\{0\}}(u\opt(t))\in\R.
		\]
		Now \cite[Theorem 24.1, Corollary 24.2]{ref:Cla-13} asserts that the state-action trajectory \([\tinit, \tfin]\ni t\mapsto \bigl(z\opt(t), u\opt(t)\bigr)\) satisfying the conditions of Proposition \ref{p:exact solution} attains a (local) minimum in \eqref{e:opt problem}.
	\end{proof}

	The difference between the above result and that in \secref{s:exact solution} lies in that Proposition \ref{p:exact solution} establishes a necessary condition, whereas Theorem \ref{t:existence} gives conditions for an optimal solution that satisfies in turn the conditions of Proposition \ref{p:exact solution}.

    We finalize our analysis by noting that the assumptions in Theorem \ref{t:existence} will be satisfied, e.g., when the problem data in \eqref{e:opt problem} is affine in the state variable.%So we formally have existence of an optimal control for linear time-invariant systems as in \eqref{e:linsys} for normal cases.

\section{Conclusions}
\label{sec:conclusions}
	The present article has derived the exact set of necessary conditions for a control function to solve a maximum hands-off optimal control problem. The question of optimality of solutions to such problems was addressed thereafter. \(\Lp 0\)-cost optimal control problems are, of course, not limited to the class of \emph{exact} control problems that involve execution of manoeuvres under given boundary conditions in a given time. Indeed, the primary engine behind our results---the nonsmooth maximum principle---admits more general boundary conditions than the ones that we have dealt with here. Future work may include examining the question of inverse optimality of bang-off-bang controls and also investigating how the choice of instantaneous cost function influences the shape of the optimal control function.

\appendix
\section{Proofs of Proposition \ref{p:exact solution} and Corollary \ref{c:linear case}}
\label{s:app}
	We apply the non-smooth Pontryagin maximum principle \cite[Theorem 22.26]{ref:Cla-13} to the optimal control problem \eqref{e:Clarke opt problem} to characterize its solutions \([\tinit, \tfin]\ni t\mapsto \bigl(z\opt(t), u\opt(t)\bigr)\). For the sake of completeness, we adapt the non-smooth Pontryagin maximum principle  from the monograph \cite{ref:Cla-13}, to which we refer the reader for complete details including the notations.\footnote{Mention must be made of the fact that the hypotheses of \cite[Theorem 22.26]{ref:Cla-13} are considerably weaker than the hypotheses of Theorem \ref{t:Clarke extended theorem} below (which is why it is an adaptation of \cite[Theorem 22.26]{ref:Cla-13}); for the present purpose, further generality is not needed.}

	\begin{theorem}[{\cite[Theorem 22.26]{ref:Cla-13}}]
	\label{t:Clarke extended theorem}
		Consider the optimal control problem
		\begin{equation}
		\label{e:Clarke extended opt problem}
		\begin{aligned}
			\minimize_{u}	&\quad  J(x, u)  =  \int_{\tinit}^{\tfin} \Lambda\bigl(u(t)\bigr) \, \dd t \\
			\sbjto		& \quad  \begin{cases}
					\dot x(t) = f\bigl(x(t), u(t)\bigr)\quad \text{for a.e.\ }t\in[\tinit, \tfin],\\
					u:[\tinit, \tfin]\lra\admact \text{ Lebesgue measurable},\\
					\bigl(x(\tinit), x(\tfin)\bigr) \in E\subset\R^d\times\R^d,
				\end{cases}
		\end{aligned}
		\end{equation}
		where \(\Lambda:\admact\lra\R\) is bounded and lower semicontinuous,\footnote{Recall that a function \(g:\R^\nu\lra\R\) is lower semicontinuous if for every \(c \in\R\) the set \(\{y\in\R^\nu\mid g(y) \le c\}\) is closed. A function \(g:\R^\nu\lra\R\) is said to be upper semicontinuous if \(-g\) is lower semicontinuous.} \(f:\R^d\times\admact\lra\R^d\) is continuously differentiable, \(\admact\subset\R^m\) compact, and \(E\) is closed. Let \( [\tinit, \tfin]\ni t\mapsto \bigl(x\opt(t), u\opt(t)\bigr) \) be a local minimizer of \eqref{e:Clarke extended opt problem}.  For a real number \( \eta \), let the \emph{Hamiltonian} $H^\eta$ be defined by 
		\[
			H^\eta(x, p, u) = \langle p, f( x, u) \rangle - \eta \Lambda(u).
		\]
	Then there exists an absolutely continuous map \( p: [\tinit, \tfin] \to \R^n \) together with a scalar \( \eta \) equal to \(0\) or \(1\) satisfying the \emph{nontriviality condition} for all \(t\in[\tinit, \tfin]\):
		\begin{equation}
		\label{e:Clarke:nontriviality}
			\bigl(\eta, p(t)\bigr) \neq 0,
		\end{equation}
		the \emph{transversality condition}:
		\begin{equation}
		\label{e:Clarke:transversality}
			\bigl(p(\tinit), - p(\tfin)\bigr) \in  N_E^L \bigl(x\opt(\tinit), x\opt(\tfin)\bigr),
		\end{equation}
		where \(N_E^L\bigl(x\opt(\tinit), x\opt(\tfin)\bigr)\) is the limiting normal cone to \(E\) at the point \(\bigl(x\opt(\tinit), x\opt(\tfin)\bigr)\), the \emph{adjoint equation} for a.e.\ \(t\in[\tinit, \tfin]\):
		\begin{equation}
		\label{e:Clarke:adjoint}
			- \dot p(t) = \partial_x H^\eta \bigl( \cdot, p(t), u\opt(t) \bigr) (x\opt(t)),
		\end{equation}
		%where \(\partial_CH^\eta\) is Clarke's generalized gradient in the space variable, 
		the \emph{Hamiltonian maximum condition} for a.e.\ \(t\in[\tinit, \tfin]\):
		\begin{equation}
		\label{e:Clarke:H max}
			H^\eta\bigl(x\opt(t), p(t), u\opt(t)\bigr) = \sup_{v\in \admact} H^\eta\bigl(x\opt(t), p(t), v\bigr),
		\end{equation} 
		as well as  the \emph{constancy of the Hamiltonian} for a.e.\ \(t\in[\tinit, \tfin]\):
		\begin{equation*}
%		\label{e:Clarke:H constancy}
			H^\eta\bigl(x\opt(t), p(t), u\opt(t)\bigr) = \sup_{v\in \admact} H^\eta\bigl(x\opt(t), p(t), u\bigr) = h.
		\end{equation*}
	\end{theorem}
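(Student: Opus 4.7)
The plan is to follow the standard route to the nonsmooth maximum principle: transform the Bolza problem \eqref{e:Clarke extended opt problem} into a Mayer problem, carry out a controlled perturbation analysis around $u\opt$, and extract the multipliers $(\eta, p(\cdot))$ via Ekeland's variational principle combined with Clarke's nonsmooth calculus. Concretely, I would first introduce an auxiliary state $y$ satisfying $\dot y(t) = \Lambda(u(t))$ with $y(\tinit) = 0$, so that the running cost becomes the endpoint value $y(\tfin)$. The augmented problem on $(x, y)$ is then of pure Mayer type with endpoint data $(x(\tinit), x(\tfin)) \in E$ and $y$ free; this reformulation is essentially cosmetic but places the problem in the form where Clarke's endpoint multiplier rule applies directly.

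The second step is to perform needle variations about the Lebesgue points of $u\opt$: for a Lebesgue point $\tau \in (\tinit, \tfin)$ and a value $v \in \admact$, the perturbed control equal to $v$ on a tiny interval of length $\epsilon$ ending at $\tau$ produces, to first order in $\epsilon$, a state variation $\xi(\cdot)$ solving a linear variational equation driven by $\partial_x f(x\opt(\cdot), u\opt(\cdot))$, with initial jump $f(x\opt(\tau), v) - f(x\opt(\tau), u\opt(\tau))$ at $t = \tau$ and an analogous jump $\Lambda(v) - \Lambda(u\opt(\tau))$ in the $y$-component. Finite concatenations of such needles generate a convex cone $K$ of attainable first-order variations at $\tfin$ in the augmented state space. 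Local optimality of $(x\opt, u\opt)$ forces $K$ to be disjoint from the interior of the improvement cone associated to the Mayer cost $y(\tfin)$ and the endpoint set $E$.

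The main obstacle is the separation step yielding the adjoint arc $p(\cdot)$, because $E$ is only closed (not necessarily convex) and $\Lambda$ is merely lower semicontinuous: a straight Hahn--Banach argument is unavailable. The remedy is to apply Ekeland's variational principle to a penalty functional that measures both the violation of the endpoint constraint $(x(\tinit), x(\tfin)) \in E$ and the deviation from the optimal cost; the approximate minimizers of the perturbed problem satisfy a set of relaxed smooth necessary conditions on which Clarke's generalized gradient calculus can be deployed. Passing to the limit as the penalty parameter degenerates produces the absolutely continuous adjoint $p$ solving the adjoint inclusion \eqref{e:Clarke:adjoint}, the transversality inclusion \eqref{e:Clarke:transversality} expressed through the limiting normal cone $N_E^L$, and the nontriviality \eqref{e:Clarke:nontriviality} via a standard normalization of $\bigl(\eta, p(\tfin)\bigr)$. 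A measurable selection argument together with upper semicontinuity of $v \mapsto H^\eta(x\opt(t), p(t), v)$ on the compact set $\admact$ then upgrades the first-order inequality coming from the needle variations to the pointwise maximum condition \eqref{e:Clarke:H max}.

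Finally, constancy of the Hamiltonian along $u\opt$ is obtained by a time-reparametrization trick: write $t = T(s)$ for a strictly increasing absolutely continuous change of variable, treat $\dot T(s) > 0$ as an additional scalar control in the extended $(x, T)$ system, and apply the maximum principle just established to this enlarged system. Because $f$ and $\Lambda$ do not depend explicitly on time, the new maximum condition in the variable $s$ forces the original $H^\eta(x\opt(\cdot), p(\cdot), u\opt(\cdot))$ to be constant almost everywhere on $[\tinit, \tfin]$. This last step is routine once the previous, technically heavy separation/Ekeland step has been carried out.
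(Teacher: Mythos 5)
The paper does not actually prove this statement: Theorem \ref{t:Clarke extended theorem} is imported, with simplified hypotheses, from Clarke's monograph \cite[Theorem 22.26]{ref:Cla-13}, and the authors explicitly defer all details to that reference. There is therefore no internal argument to compare yours against; what you have written is a roadmap for reproving the nonsmooth maximum principle itself. As a roadmap it is the standard one and is consistent with how the result is established in the literature: Bolza-to-Mayer reduction via the auxiliary state $y$, approximate minimization of a penalized endpoint functional via Ekeland's variational principle, limiting normal cones and generalized-gradient calculus to pass to the limit and extract $(\eta, p(\cdot))$, and a time-reparametrization of the autonomous system for constancy of the Hamiltonian. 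However, essentially the entire content of the theorem is concentrated in the step you dispatch in one sentence --- the Ekeland/penalization argument and the limit passage --- which occupies a substantial portion of Clarke's book; as written, your text is an outline rather than a proof. If you intend to execute it, two points specific to the present hypotheses need care: (i) the Hamiltonian maximization \eqref{e:Clarke:H max} with a merely lower semicontinuous $\Lambda$ relies on $\eta \ge 0$ so that $v \mapsto -\eta\Lambda(v)$ is upper semicontinuous and the supremum over the compact set $\admact$ is attained, and the measurability of $t\mapsto\Lambda\bigl(u\opt(t)\bigr)$, needed to define the auxiliary state, must be justified from the Borel measurability of lower semicontinuous functions; (ii) the reparametrization trick introduces $\dot T$ as an additional control ranging over an open half-line, so the enlarged problem no longer has a compact control set and a truncation argument is required before the maximum principle just established can legitimately be applied to it. Given that the paper treats this result as a black box, the cleanest course is to do the same and cite \cite{ref:Cla-13} rather than attempt a self-contained proof.
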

The above non-smooth maximum principle can be used to derive the exact set of necessary conditions for maximum hands-off control \eqref{e:opt problem} as follows:

	\begin{proof}[Proof of Proposition \ref{p:exact solution}]
		We apply the non-smooth Pontryagin maximum principle Theorem \ref{t:Clarke extended theorem} to the optimal control problem \eqref{e:Clarke opt problem}. %Notice that Theorem \ref{t:Clarke extended theorem} applies directly; in particular, the Lipschitz growth condition in the space variable required in \cite[Theorem 22.26]{ref:Cla-13} is automatically fulfilled in our case because the instantaneous cost function \(\Lambda\) is independent of the space variable \(x\). 
		For \(\eta \ge 0\) we define the \emph{Hamiltonian function} (cf.\ \cite[p.\ 464]{ref:Cla-13}) 
				\begin{multline*}
					\R^d\times\R^d\times\admact\ni(\xi, \pi, \mu) \mapsto\\
					\Hamiltonian^\eta(\xi, \pi, \mu) \Let \inprod{\pi}{\phi(\xi, \mu)} + \eta \indic{\{0\}}(\mu)\in\R.
				\end{multline*}
		In order to  derive the \emph{adjoint state equation}, we notice that for fixed \(\pi, \mu\), the function \(\R^d\ni \xi\mapsto \Hamiltonian^\eta(\xi, \pi, \mu)\) is smooth. %Therefore, its Clarke generalized gradient 
				%\[
				%	\partial_C \Hamiltonian^\eta\bigl(\cdot, p(t), u\opt(t)\bigr)(z\opt(t))
				%\]
				%is a singleton set pointwise in \(t\in[\tinit, \tfin]\), and this equals \(\partial_{\xi} \Hamiltonian^\eta\bigl(\cdot, p(t), u\opt(t)\bigr)(z\opt(t))\) for each \(t\in[\tinit, \tfin]\). 
				It follows that the adjoint state differential equation \eqref{e:Clarke:adjoint},\footnote{If the dynamics in \eqref{e:sys} were not smooth, then one would have a differential inclusion instead of the differential equation  \eqref{exact:adjoint}.} is given by
                \begin{equation}
                \label{exact:adjoint}
				\begin{aligned}
					\dot p(t)	& = -\partial_\xi \Hamiltonian^\eta\bigl(z\opt(t), p(t), u\opt(t)\bigr)\\
						& = - \Bigl( \partial_\xi \phi\bigl( z\opt(t), u\opt(t) \bigr) \Bigr)\transp p(t),
				\end{aligned}
				\quad \text{for a.e.\ }t\in[\tinit, \tfin].
                              \end{equation}
                              This o.d.e.\ is linear in \(p\), and due to continuous differentiability of \(\phi\), admits a unique solution on \([\tinit, \tfin]\).
            
			\par   With
                        \(E \Let \{ (A, B) \} \subset \R^d\times\R^d\) being the
                        end-points, the transversality condition \eqref{e:Clarke:transversality} to
                        \eqref{exact:adjoint} is given by
				\[
					\bigl(p(\tinit), -p(\tfin)\bigr) \in N^L_E\bigl(z\opt(\tinit), z\opt(\tfin)\bigr) = N^L_E(A, B),
				\]
				where \(N^L_E(A, B)\) is the \textsl{limiting normal cone} to \(E\) at \((A, B)\) as defined in \cite[p.\ 244]{ref:Cla-13}. Since \(E\) is a singleton, it follows from the definitions in \cite[p.\ 244, p.\ 240]{ref:Cla-13} that \(N^L_E(A, B) = \R^d\times\R^d\). In other words, the boundary conditions of the adjoint state equation \eqref{exact:adjoint} are unconstrained.
            
			The \emph{Hamiltonian maximization condition} \eqref{e:Clarke:H max} is given by
				\begin{align*}
					\Hamiltonian^{\eta}\bigl(z\opt(t),
                                        p(t), u\opt(t)\bigr)	& = \inprod{p(t)}{\phi\bigl(z\opt(t), u\opt(t)\bigr)} + \eta\indic{\{0\}}(u\opt(t))\\
						& = \sup_{v\in\admact} \Bigl\{\inprod{p(t)}{\phi\bigl(z\opt(t), v\bigr)} + \eta \indic{\{0\}}(v)\Bigr\}
				\end{align*}
				for a.e.\ \(t\in[\tinit, \tfin]\). Since the function 
				\[
					\admact\ni v\mapsto \inprod{p(t)}{\phi\bigl(z\opt(t), v\bigr)} + \eta \indic{\{0\}}(v)\in\R
				\]
				is upper semicontinuous, the supremum is attained in \(\admact\) by Weierstrass' theorem. In other words, the optimal control \(u\opt\) is given by, for a.e.\ \(t\in[\tinit, \tfin]\),
				\[
					u\opt(t) \in \argmax_{v\in\admact} \Bigl\{\inprod{p(t)}{\phi\bigl(z\opt(t), v\bigr)} + \eta \indic{\{0\}}(v)\Bigr\}.
				\]
            
                Finally, the \emph{nontriviality condition} \eqref{e:Clarke:nontriviality} states that \((\eta, p(t)) \neq (0, 0)\in\R\times\R^d\) for every \(t\in[\tinit, \tfin]\). Thus, solutions \([\tinit, \tfin]\ni t\mapsto \bigl(z\opt(t), u\opt(t)\bigr)\) to \eqref{e:Clarke opt problem} must satisfy equations \eqref{e:exact:solution}-\eqref{e:exact:solution 2}.
		\end{proof}

	\begin{proof}[Proof of Corollary \ref{c:linear case}]
		If \(\eta = 1\), for the linear case (i.e., \(\phi\) of the form \eqref{e:linsys}), the adjoint state equation is given by
		\[
			\dot p(t) = - F\transp p(t)\quad\text{for a.e.\ }t\in[\tinit, \tfin],
		\]
		which leads to the general solution 
		\[
			p(t) = \epower{-(t - \tinit)F\transp} p(\tinit)\quad\text{for all }t\in[\tinit, \tfin].
		\]
		The transversality condition \(\hat p\in\R^d\) gives the terminal condition \(p(\tfin) = \hat p\). This condition does not provide any further information about the end-point conditions for the adjoint equation. However, from the adjoint state condition  and the transversality condition we have \(\hat p = p(\tfin) = \epower{-(\tfin - \tinit)F\transp} p(\tinit)\), which shows that $$p(\tinit) = \epower{(\tfin - \tinit) F\transp} \hat p.$$ In terms of the final condition \(\hat p\),  the solution to the adjoint state equation thus reduces to
		\[
			p(t) = \epower{(\tfin - t)F\transp} \hat p\quad\text{for all }t\in[\tinit, \tfin].
		\]

		\par In view of the above, the Hamiltonian maximization condition becomes\footnote{Of course, with the normalization \(\eta = 1\), and under which we omit the superscript \(1\) on \(\Hamiltonian\).}
			\begin{align*}
				\Hamiltonian\bigl(z\opt(t), p(t), u\opt(t)\bigr)	& = \inprod{p(t)}{F z\opt(t) + G u\opt(t)} + \indic{\{0\}}(u\opt(t))\\
					& = \sup_{v\in\admact} \Bigl\{ \inprod{p(t)}{F z\opt(t) + G v} + \indic{\{0\}}(v)\Bigr\}
			\end{align*}
			for a.e.\ \(t\in[\tinit, \tfin]\). Since \(\{0\}\) is a
                        closed subset of \(\admact\), the map \(\admact\ni
                        v\mapsto \indic{\{0\}}(v)\in\R\) is an upper
                        semicontinuous function. Due to upper semicontinuity of
                        \(\admact\ni v\mapsto \inprod{p(t)}{G v} +
                        \indic{\{0\}}(v)\) and compactness of \(\admact\) (and  in view of Weierstrass' theorem), the supremum above is attained at some point of \(\admact\) for a.e.\ \(t\in[\tinit, \tfin]\).

		We conclude that the optimal control  is given by 
		\[
			u\opt(t) \in \argmax_{v\in\admact} \Bigl\{ \inprod{G\transp \epower{(\tfin - t)F\transp}\hat p}{v} + \indic{\{0\}}(v)\Bigr\}\quad \text{for all }t\in[\tinit, \tfin],
		\]
		which establishes the result.

		The case of \(\eta = 0\) is similar. The only additional observation here is that the point \(\hat p\) cannot be \(0\) for otherwise the nontriviality conditon \(\bigl(\eta, p(t)\bigr) \neq (0, 0)\in\R\times\R^d\) for all \(t\in[\tinit,\tfin]\) would be violated.
	\end{proof}

\section{Acknowledgments}
	D.\ Chatterjee was supported in part by the grant 12IRCC005SG from IRCC, IIT Bombay, India. M. Nagahara was supported in part by JSPS KAKENHI Grant Numbers 26120521, 15K14006, and 15H02668.

\end{multicols}


\begin{thebibliography}{10}

\bibitem{ref:AthansFalb-66}
{\sc M.~Athans and P.~L. Falb}, {\em Optimal {C}ontrol. {A}n {I}ntroduction to
  the {T}heory and its {A}pplications}, McGraw-Hill Book Co., New York-Toronto,
  Ont.-London, 1966.

\bibitem{ref:Bet-09}
{\sc J.~T. Betts}, {\em Practical {M}ethods for {O}ptimal {C}ontrol and
  {E}stimation using {N}onlinear {P}rogramming}, Advances in Design and Control
  19, Society for Industrial \& Applied Mathematics, 2nd edition~ed., 2009.

\bibitem{bhabac11a}
{\sc S.~Bhattacharya and T.~Ba\c{}sar}, {\em Sparsity based feedback design: A
  new paradigm in opportunistic sensing}, in Proc.~Amer.~Contr.~Conf., 2011.

\bibitem{ref:BrePic-07}
{\sc A.~Bressan and B.~Piccoli}, {\em Introduction to the {M}athematical
  {T}heory of {C}ontrol}, vol.~2 of AIMS Series on Applied Mathematics,
  American Institute of Mathematical Sciences (AIMS), Springfield, MO, 2007.

\bibitem{brocke97}
{\sc R.~W. Brockett}, {\em Minimum attention control}, in Proc.~IEEE
  Conf.~Decis.~Contr., 1997, pp.~2628--2632.

\bibitem{ref:CapForPicTre-13}
{\sc M.~Caponigro, M.~Fornasier, B.~Piccoli, and E.~Tr{\'e}lat}, {\em Sparse
  stabilization and optimal control of the {C}ucker-{S}male model},
  Mathematical Control and Related Fields, 3 (2013), pp.~447--466.

\bibitem{ref:CapForPicTre-15}
\leavevmode\vrule height 2pt depth -1.6pt width 23pt, {\em Sparse stabilization
  and control of alignment models}, Mathematical Models and Methods in Applied
  Sciences, 25 (2015), pp.~521--564.

\bibitem{ref:CasHerWac-12}
{\sc E.~Casas, R.~Herzog, and G.~Wachsmuth}, {\em Approximation of sparse
  controls in semilinear equations by piecewise linear functions}, Numerische
  Mathematik, 122 (2012), pp.~645--669.

\bibitem{ref:CasTro-14}
{\sc E.~Casas and F.~Tr{\"o}ltzsch}, {\em Second-order and stability analysis
  for state-constrained elliptic optimal control problems with sparse
  controls}, SIAM Journal on Control and Optimization, 52 (2014),
  pp.~1010--1033.

\bibitem{ref:Cla-13}
{\sc F.~Clarke}, {\em Functional {A}nalysis, {C}alculus of {V}ariations and
  {O}ptimal {C}ontrol}, vol.~264 of Graduate Texts in Mathematics, Springer,
  London, 2013.

\bibitem{ref:ClaItoKun-15}
{\sc C.~Clason, K.~Ito, and K.~Kunisch}, {\em A convex analysis approach to
  optimal controls with switching structure for partial differential
  equations}.
\newblock ESAIM: Control, Optimization, and Calculus of Variations.
\newblock http://dx.doi.org/10.1051/cocv/2015017.

\bibitem{don06a}
{\sc D.~L. Donoho}, {\em Compressed sensing}, IEEE Trans.\ Inf.\ Theory, 52
  (2006), pp.~1289--1306.

\bibitem{ref:Fil-88}
{\sc A.~F. Filippov}, {\em Differential {E}quations with {D}iscontinuous
  {R}ighthand {S}ides}, vol.~18 of Mathematics and its Applications (Soviet
  Series), Kluwer Academic Publishers Group, Dordrecht, 1988.
\newblock Translated from the Russian.

\bibitem{hargal13a}
{\sc E.~N. Hartley, M.~Gallieri, and J.~M. Maciejowski}, {\em Terminal
  spacecraft rendezvous and capture with {LASSO} model predictive control},
  Int. J. Contr., 86 (2013), pp.~2104--2113.

\bibitem{ref:LinFarJov-13}
{\sc F.~Lin, M.~Fardad, and M.~R. Jovanovic}, {\em Design of optimal sparse
  feedback gains via the alternating direction method of multipliers}, IEEE
  Transactions on Automatic Control, 58 (2013), pp.~2426--2431.

\bibitem{ref:NagQueNes-14}
{\sc M.~Nagahara, D.~Quevedo, and D.~Ne\v{s}i\'c}, {\em Maximum hands-off
  control: a paradigm of control effort minimization}, IEEE Transactions on
  Automatic Control, 61 (2016).
\newblock Available at http://arxiv.org/abs/1408.3025.

\bibitem{nagque13a}
{\sc M.~Nagahara, D.~E. Quevedo, and D.~Ne\v{s}i\'c}, {\em Maximum hands-off
  control and {$L^1$} optimality}, in Proc.~IEEE Conf.~Decis.~Contr., 2013.

\bibitem{nagque14a}
{\sc M.~Nagahara, D.~E. Quevedo, and J.~{\O stergaard}}, {\em Sparse packetized
  predictive control for networked control over erasure channels}, {IEEE}
  Trans. Automat. Contr., 59 (2014), pp.~1899--1905.

\bibitem{schebe11a}
{\sc S.~Schuler, C.~Ebenbauer, and F.~Allg\"ower}, {\em $\ell_0$-system gain
  and $\ell_1$-optimal control}, in Proc.~{IFAC} World Congr., 2011,
  pp.~9230--9235.

\bibitem{ref:Sta-09}
{\sc G.~Stadler}, {\em Elliptic optimal control problems with {$L^1$}-control
  cost and applications for the placement of control devices}, Computational
  Optimization and Applications. An International Journal, 44 (2009),
  pp.~159--181.

\bibitem{ref:VosMau-06}
{\sc G.~Vossen and H.~Maurer}, {\em On {$L^1$}-minimization in optimal control
  and applications to robotics}, Optimal Control Applications \& Methods, 27
  (2006), pp.~301--321.

\end{thebibliography}
\end{document}